%
\documentclass[11pt,a4paper]{article}

\usepackage[margin=2.50cm]{geometry}

\usepackage{url}
\pdfoutput=1

%
%

\usepackage{amsmath,amssymb,euscript}

\usepackage{graphicx}
\usepackage{color}

\usepackage{ntheorem}

\usepackage{float}

\floatstyle{ruled}
\newfloat{structure}{Hhtb}{los}
\floatname{structure}{Structure}

\usepackage{subfig}
\usepackage{algorithmic}
\usepackage{algorithm}
\usepackage{wrapfig}


\newcounter{prob}
        {\end{list}}
%
 %
\newenvironment{proof}[1][Proof.]{
  \begin{trivlist}\item[]\textit{#1}\, }%
  {\hfill\hspace*{\fill}~$\square$\end{trivlist}}


\theorembodyfont{\normalfont}
\newtheorem{theorem}{Theorem}
\newtheorem{definition}{Definition}
%


%

\definecolor{turquoise}{cmyk}{0.65,0,0.1,0.1}


\newcommand{\rawdef}[1]{\emph{#1}} 
\newcommand{\defn}[1]{\rawdef{#1}\index{#1}}

\newcommand{\Defref}[1]{Definition~\ref{#1}}
\newcommand{\Eqnref}[1]{Equation~\eqref{#1}}
\newcommand{\Figref}[1]{Figure~\ref{#1}}

\newcommand{\Secref}[1]{Section~\ref{#1}}

\newcommand{\Thmref}[1]{Theorem~\ref{#1}}


\newcommand{\cinfty}{C^\infty}



\newcommand{\R}{\mathbb{R}}
\newcommand{\reel}{\mathbb{R}}

\newcommand{\rthree}{\reel^3}

\newcommand{\rem}{\reel^m}

\newcommand{\ints}{\mathbb{Z}}

\newcommand{\norm}[1]{\left\|#1\right\|}
\newcommand{\abs}[1]{\left|#1\right|}

\newcommand{\size}[1]{\left|#1\right|}


\newcommand{\bdry}[1]{\partial{#1}}

\newcommand{\asimplex}[1]{\{#1\}} 
\newcommand{\carrier}[1]{\abs{#1}} 





\newcommand{\gdist}{d} 

\newcommand{\gdistG}[1]{\gdist_{#1}} 
\newcommand{\distG}[3]{\gdist_{#1}(#2,#3)}
\newcommand{\distE}[2]{\distG{\rthree}{#1}{#2}}



\newcommand{\gdistM}{\gdistG{\man}}
\newcommand{\distM}[2]{\distG{\man}{#1}{#2}}


 


\newcommand{\spaceball}[3]{B_{#1}(#2,#3)} 


\newcommand{\ballM}[2]{\spaceball{\man}{#1}{#2}}

\DeclareMathOperator{\aff}{aff} 
\newcommand{\affhull}[1]{\aff(#1)}


\newcommand{\mpts}{P} 

\newcommand{\man}{\mathcal{M}}

\DeclareMathOperator{\Vor}{Vor}

\newcommand{\vorMmpts}{\Vor_{\man}(\mpts)}
\newcommand{\vorcell}[2]{\mathcal{V}_{#1}(#2)}

\newcommand{\vorcellman}[1]{\vorcell{\man}{#1}}



\DeclareMathOperator{\Del}{Del}
\newcommand{\del}[2]{\Del_{#1}(#2)} 

\newcommand{\delM}[1]{\del{\man}{#1}} 
\newcommand{\delMmpts}{\delM{\mpts}}



\newcommand{\samconst}{\epsilon}
\newcommand{\tsamconst}{\tilde{\epsilon}}
 






\newcommand{\splxs}{\sigma}

\newcommand{\splxt}{\tau}






%

%





\DeclareMathOperator{\injr}{inj}
\newcommand{\injrad}[1]{\injr(#1)}
\newcommand{\injradM}{\injrad{\man}}

\newcommand{\bmax}{b_{\mathrm{max}}}
\newcommand{\txi}{\tilde{\xi}}

\def\unfrac#1#2{#1/#2}


\usepackage{enumitem}
\usepackage[pdftex,pagebackref=true]{hyperref}


\title{An obstruction to Delaunay triangulations in\\
Riemannian manifolds
}

\author{
Jean-Daniel Boissonnat
\footnote{
INRIA, DataShape,
Sophia-Antipolis, France
\url{Jean-Daniel.Boissonnat@inria.fr}
}
\and
Ramsay Dyer
\footnote{
INRIA, DataShape
\url{Ramsay.Dyer@inria.fr}
}
\and
Arijit Ghosh
\footnote{
Indian Statistical Institute,
ACM Unit,
Kolkata, India
\url{arijitiitkgpster@gmail.com}
}
\and
Nikolay Martynchuk
\footnote{
Johann Bernoulli Institute for Mathematics and Computer Science, 
University of Groningen, 
P.O. Box 407, 9700 AK, 
Groningen, The Netherlands
\url{N.Martynchuk@rug.nl}
}
}


\begin{document}

\maketitle

\begin{abstract}
  Delaunay has shown that the Delaunay complex of a finite set of
  points $\mpts$ of Euclidean space $\rem$ triangulates the convex
  hull of $\mpts$, provided that $\mpts$ satisfies a mild genericity
  property. Voronoi diagrams and Delaunay complexes can be defined for
  arbitrary Riemannian manifolds. However, Delaunay's genericity
  assumption no longer guarantees that the Delaunay complex will yield
  a triangulation; stronger assumptions on $\mpts$ are required. A
  natural one is to assume that $\mpts$ is sufficiently dense.
  Although results in this direction have been claimed, we show that
  sample density alone is insufficient to ensure that the Delaunay
  complex triangulates a manifold of dimension greater than 2.
\end{abstract}

%

\section{Delaunay complex and Delaunay triangulation}

Let $(\man, \gdistM)$ be a metric space, and let $\mpts$ be a
finite set of points of $\man$.  An \defn{empty ball} is an open ball
in the metric $\gdistM$ that contains no point from $\mpts$.  We say
that an empty ball $B$ is \defn{maximal} if no other empty ball with
the same centre properly contains $B$.  A \defn{Delaunay ball} is a
maximal empty ball.

A simplex $\splxs$ is a \defn{Delaunay simplex} if there exists some
Delaunay ball $B$ that circumscribes $\splxs$, i.e., such that the
vertices of $\splxs$ belong to $\bdry{B}\cap \mpts$.  The
\defn{Delaunay complex} is the set of Delaunay simplices, and is
denoted $\delMmpts$. It is an abstract simplicial complex and so
defines a topological space, $\carrier{\delMmpts}$, called its
\defn{carrier}. We say that $\delMmpts$ \defn{triangulates} $\man$ if
$\carrier{\delMmpts}$ is homeomorphic to $\man$.  A Delaunay
\defn{triangulation} of $\man$ is a homeomorphism
$H\colon \carrier{\delMmpts} \to \man$.

The \defn{Voronoi cell} associated with $p \in \mpts$
is given by
\begin{equation*}
  \vorcellman{p} = \{x \in \man \mid \distM{x}{p} \leq \distM{x}{q}
  \text{ for all } q \in \mpts \}.
\end{equation*}
More generally, a \defn{Voronoi face} is the intersection of a set of
Voronoi cells: given $\splxs = \{p_0,\ldots, p_k\} \subset \mpts$, we
define the associated Voronoi face as
\begin{equation*}
  \vorcellman{\splxs} = \bigcap_{i=0}^{k} \vorcellman{p_i}.
\end{equation*}
It follows that $\splxs$ is a Delaunay simplex if and only if
$\vorcellman{\splxs} \neq \emptyset$. In this case, every point in
$\vorcellman{\splxs}$ is the centre of a Delaunay ball for $\splxs$.
Thus every Voronoi face corresponds to a Delaunay simplex. The Voronoi
cells give a decomposition of $\man$, denoted $\vorMmpts$, called the
\defn{Voronoi diagram}.  The Delaunay complex of $\mpts$ is the nerve
of the Voronoi diagram.

In the case of $\rem $ equipped with the standard Euclidean metric,
Delaunay~\cite{delaunay1934} showed that, if $\mpts$ is
\defn{generic}, then the natural inclusion
$\mpts \hookrightarrow \rem$ induces a piecewise linear embedding
$\carrier{\del{\rem}{\mpts}} \hookrightarrow \rem$ of the Delaunay
complex of $\mpts$ into $\rem$.
This is
called the Delaunay triangulation of $\mpts$ (it is a triangulation of
the convex hull of $\mpts$).
The point set $\mpts$ is generic if there is no Delaunay ball with
more than $m+1$ points of $\mpts$ on its boundary.  Point sets that
are not generic are often dismissed in theoretical work, because
an
arbitrarily small perturbation of the points will almost surely
yield a generic point set \cite{delaunay1934}. 
In the sense of the standard measure
in the configuration space $\reel^{m \times \size{\mpts}}$, almost all
point sets will yield a Delaunay triangulation.

A similar situation is known for certain standard non-Euclidean
geometries, such as Laguerre geometry~\cite{Aurenhammer87}, spaces
equipped with a Bregman divergence~\cite{BoissonnatNN10-Bregman}, or
Riemannian manifolds of constant sectional curvature\footnote{This is
  standard folklore. We are not aware of a published reference.}.

Leibon and Letscher~\cite{leibon2000} announced sampling density
conditions which would ensure that the Delaunay complex defined by the
intrinsic metric of an arbitrary compact Riemannian manifold admits a
triangulation.  When triangulating submanifolds of dimension $3$ and
higher in Euclidean space using Delaunay techniques, it was
subsequently discovered that near degenerate ``sliver'' simplices pose
problems which cannot be escaped simply by increasing the sampling
density.  In particular, developing an example on a $3$-manifold
presented by Cheng et al.~\cite{cheng2005}, Boissonnat et
al.~\cite[Lemma 3.1]{boissonnat2009} show that, using the metric of
the ambient Euclidean space restricted to the submanifold, the
resulting Delaunay complex (called the \defn{restricted Delaunay
  complex}) need not triangulate the original submanifold, even with
dense well-separated (no two points are too close) sampling.

Here we develop a similar example from the perspective of the
intrinsic metric of the manifold. It can be argued that this is an
easier way to visualize the problem, since we confine our viewpoint to
a three dimensional space and perturb the metric, without referring to
deformations into a fourth ambient dimension. This viewpoint also
provides an explicit counterexample to the results announced by Leibon
and Letscher~\cite{leibon2000}.  We construct a fixed compact
Riemannian manifold and demonstrate (\Thmref{thm:counter.ex}) that for
any sampling density there exists a point set that meets the sampling
density and has good separation, but that does not admit a Delaunay
triangulation, and furthermore, this property is retained when the
point set is subjected to a sufficiently small perturbation. Thus, not
only is there no sampling density that is sufficient to ensure the
existence of a Delaunay triangulation on all compact Riemannian
manifolds, but for any given fixed compact manifold, there may exist
no sampling density that can guarantee a Delaunay triangulation. In
particular, this latter property will complicate attempts to use
Delaunay techniques in the construction and asymptotic analysis of
optimal triangulations \cite{clarkson2006,deLaat2011}.

Although we focus on the intrinsic metric, the same qualitative
construction applies to the restricted ambient metric, demonstrating
that the claim by Cairns~\cite{cairns1961}, that the Voronoi faces of
the restricted Voronoi diagram are closed topological balls if the
sampling is sufficiently dense, is also incorrect. 

The main purpose of this note is to clear up these persistent
misconceptions that have appeared in the published literature.

\section{A qualitative argument}
\label{sec:qualitative.counterex}

As we show in this section with a qualitative argument, the problem
can be viewed as arising from the fact that in a manifold of dimension
$m>2$, the intersection of two metric spheres is not uniquely
specified by $m$ points. We demonstrate the issue in the context of
Delaunay balls. The problem is developed quantitatively in terms of
the Voronoi diagram in Section~\ref{sec:counter.ex}. 

We work exclusively on a three dimensional domain. The problem is a
local one, and we are not concerned with ``boundary conditions''; we
are looking at a coordinate patch on a densely sampled compact
$3$-manifold. 

One core ingredient in Delaunay's triangulation
result~\cite{delaunay1934} is that any triangle $\splxt$ is the face
of exactly two tetrahedra. This follows from the observation that a
triangle has a unique circumcircle, and that any circumscribing sphere
for $\splxt$ must include this circle. The affine hull of $\splxt$
cuts space into two components, and if $\splxt \in \del{\rem}{\mpts}$,
then it will have an empty circumscribing sphere centred at a point $c$ on the
line through the circumcentre and orthogonal to $\affhull{\splxt}$.
The point $c$ is contained on an interval on this line which contains
all the empty spheres for $\splxt$. The endpoints of the interval are
the circumcentres of the two tetrahedra that share $\splxt$ as a face.

The argument hinges on the assumption that the points are in general
position, and the uniqueness of the circumcircle for $\splxt$. If
there were a fourth vertex lying on that circumcircle, then there
would be three tetrahedra that have $\splxt$ as a face, but this
configuration would violate the assumption of general position.

Now if we allow the metric to deviate from the Euclidean one, no
matter how slightly, the guarantee of a well defined unique
circumcircle for $\splxt$ is lost. In particular, if three spheres $S_1$,
 $S_2$ and $S_3$ all circumscribe $\splxt$, their pairwise intersections
 will be different in general, i.e.,
 \begin{equation*}
   S_1 \cap S_3 \neq S_2 \cap S_3.
 \end{equation*}
 Although these intersections may be topological circles that are
 ``very close'' assuming the deviation of the metric from the
 Euclidean one is small enough, ``very close'' is not good enough when
 the only genericity assumption allows configurations that are
 arbitrarily bad. 

\begin{figure}
  \begin{center}
    \includegraphics[width=.8\columnwidth]{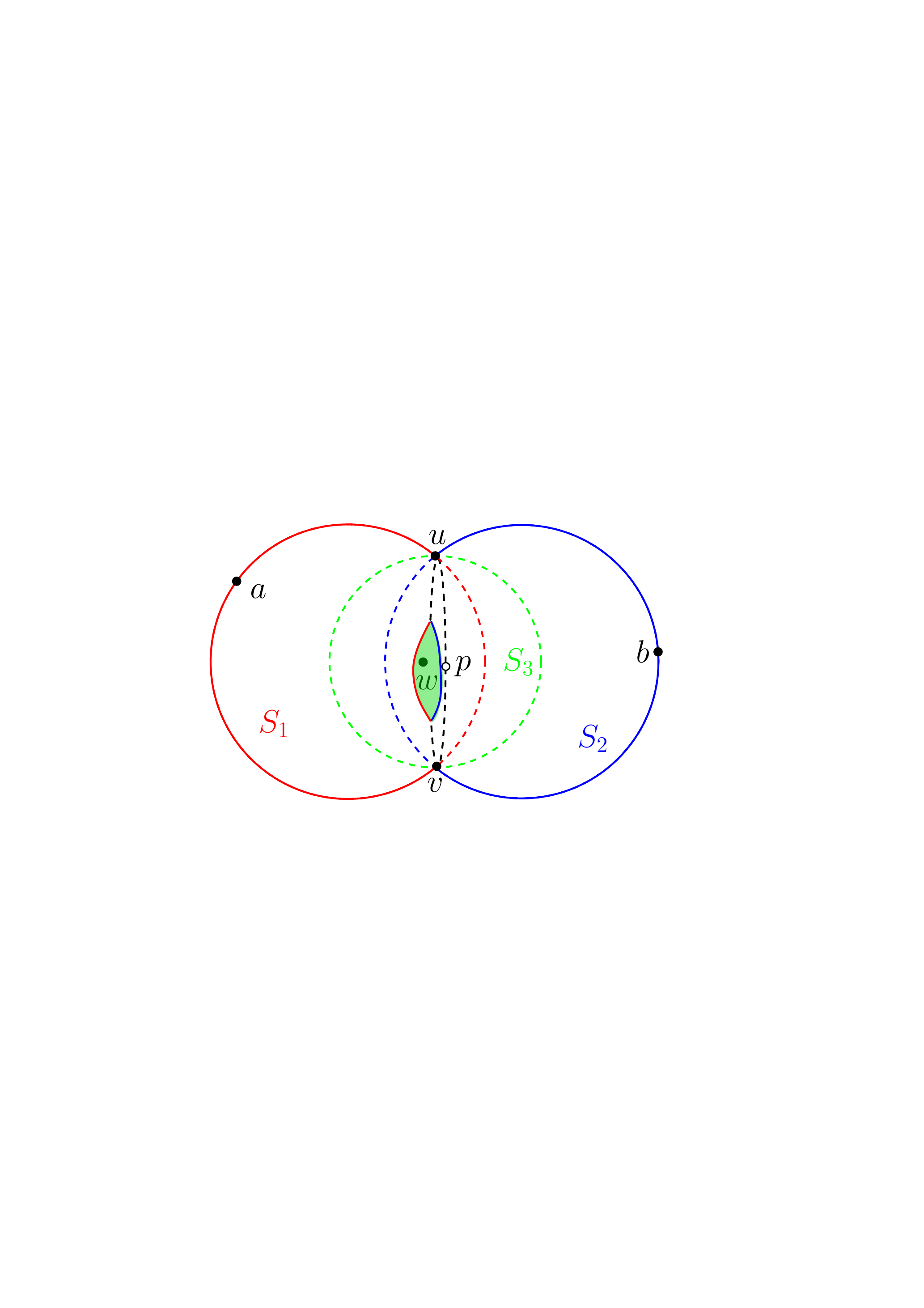}
  \end{center}
  \caption{In three dimensions, three closed geodesic balls can all
    touch three points, $u,v,p$, on their boundary and yet no one of
    them is contained in the union of the other two. 
In Euclidean space $\asimplex{u,v,p}$
    defines a unique circumcircle (shown dashed black) that is the
    intersection of any two distinct spheres that circumscribe
    $\asimplex{u,v,p}$, but if the metric deviates from Euclidean,
    this circle is no longer relevant; we can have 
three distinct spheres, $S_1$, $S_2$, and $S_3$ all circumscribing
$\asimplex{u,v,p}$ and such that $S_1 \cap S_3 \neq S_2 \cap S_3$.
This means that
    $\asimplex{u,v,p}$ can be a face of three Delaunay tetrahedra, e.g.,
    $S_1$, $S_2$ and $S_3$ could be circumscribing spheres for
    $\asimplex{a,u,v,p}$, $\asimplex{u,v,p,b}$, and
    $\asimplex{u,v,p,w}$, respectively.} 
  \label{fig:three.tets}
\end{figure}
The problem is illustrated in \Figref{fig:three.tets}, where
$\splxt = \asimplex{u,v,p}$. Here, circumspheres $S_1$ and $S_2$ of
the Delaunay tetrahedra $\asimplex{a,u,v,p}$ and $\asimplex{u,v,p,b}$
would contain any empty sphere $S_3$ that circumscribes $\splxt$ if
the metric were Euclidean, but any aberration in the metric may leave
a part of $S_3$ exposed to the outside. This means that in principle
another sample point $w$ could lie on~$S_3$, while $S_1$ and $S_2$
remain empty. Thus there would be a third Delaunay tetrahedron,
$\asimplex{u,v,p,w}$ that shares $\splxt$ as a face.

In dimension 2 this problem does not arise \cite{leibon1999,dyer2008sgp}. 
The essential difference between dimension 2 and the higher dimensions
can be observed by examining the topological intersection properties
of spheres. Specifically, two $(m-1)$-spheres intersect transversely
in an $(m-2)$-sphere. For a non-Euclidean metric, even if this
property holds for sufficently small geodesic spheres, it is only when
$m=2$ that the sphere of intersection of the Delaunay spheres of two
adjacent $m$-simplices is uniquely defined by the vertices of the
shared $(m-1)$-simplex.


%
%

\section{An obstruction to intrinsic Delaunay triangulations}
\label{sec:counter.ex}

We now explicitly show how density assumptions alone cannot escape
topological problems in the Delaunay complex.  The configuration
considered here may be recognised as similar to the one described
qualitatively in \Secref{sec:qualitative.counterex}, but here we
consider the Voronoi diagram rather than Delaunay balls.  As before we
work exclusively in a local coordinate patch on a densely sampled
compact $3$-manifold.

The idea is to consider four points in the $xz$-plane in Euclidean
space, and show that with a mild perturbation of the metric, this
tetrahedron exhibits exactly two distinct circumcentres. We construct
a perturbed metric such that for any sampling density, a tetrahedron
with two distinct Delaunay balls can exist. This means that the
Delaunay complex, defined as the nerve of the Voronoi diagram, will
not triangulate the manifold (this Delaunay tetrahedron will have two
triangle faces that are not the face of any other Delaunay
tetrahedron, as discussed below).

Delaunay's genericity condition~\cite{delaunay1934} naturally extends
to the setting where $\man$ is a manifold, and Leibon and
Letscher~\cite[p.~343]{leibon2000} explicitly assume genericity in
this sense:
\begin{definition}
  \label{def:ll.delaunay.gen}
  If $\man$ is a Riemannian $m$-manifold, the set
  $\mpts \subset \man$, is \defn{generic} if no subset of $m+2$ points
  lies on the boundary of a geodesic ball.
\end{definition}
Delaunay only imposed the constraint on empty balls, and he showed
that any (finite or periodic) point set in Euclidean space can be made
generic through an arbitrarily small affine perturbation. That a
similar construction of a perturbation can be made for points on a
compact Riemannian manifold has not been explicitly
demonstrated. However, in light of the construction we now present, it
seems that the question is moot when $m>2$, because an arbitrarily
small perturbation from degeneracy will not be sufficient to ensure
that a Delaunay triangulation exists.

For a point $x$ in a compact Riemannian manifold $\man$ (without
boundary), the injectivity radius at $x$ is the supremum of radii $r$
such that a geodesic ball with radius $r$, i.e.,
$\ballM{x}{r} = \{ y \in \man \mid \distM{x}{y} < r \}$, has the
property that each $y \in \ballM{x}{r}$ is connected to $x$ by a
unique minimising geodesic. The \defn{injectivity radius} of $\man$,
denoted $\injradM$, is the infimum of the injectivity radii at the
points of $\man$. When $\man$ is a compact manifold, $\injradM > 0$;
see Chavel~\cite[\S III.2]{chavel2006}.  

We say $\mpts \subset \man$ is \defn{$\samconst$-dense} if
$\distM{x}{\mpts} < \samconst$ for any $x \in \man$. If
$\distM{p}{q} \geq \tsamconst$ for all $p,q \in \mpts$, then $\mpts$
is \defn{$\tsamconst$-separated}. The set $\mpts$ is an
\defn{$\samconst$-net} if it is $\samconst$-dense and
$\samconst$-separated.

\begin{theorem}
  \label{thm:counter.ex}
  For every integer $m>2$ there exists a compact Riemannian
  $m$-manifold $\man$ such that for any $\samconst > 0$, with
  $\samconst < \injradM/2$, there exists a generic $\samconst$-net
  $\mpts \subset \man$ such that $\delMmpts$ does not triangulate
  $\man$. Furthermore, this holds true for all perturbed point sets
  $\mpts' \subset \man$ that are within a small positive Hausdorff
  distance of $\mpts$.
\end{theorem}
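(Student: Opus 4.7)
The plan is to reduce to dimension $m=3$ (the higher-dimensional case follows by taking a Riemannian product of the 3-dimensional example with a small flat $(m-3)$-torus, so that a bad 3-simplex extends to a bad $m$-simplex), and to construct $\man$ as a flat 3-torus $\reel^3/\ints^3$ in which the metric is locally perturbed inside a small ball $B_0$ to break a reflection symmetry. Outside $B_0$ the metric remains Euclidean. Inside $B_0$ I would introduce a controlled perturbation of the metric tensor that is not invariant under $y \mapsto -y$, parametrised by a small $\eta > 0$.

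The central local step is to exhibit four points $p_1,p_2,p_3,p_4$ in the plane $\{y=0\}$, arranged in a symmetric kite configuration, such that in the perturbed metric the simplex $\splxs = \asimplex{p_1,p_2,p_3,p_4}$ admits exactly two maximal empty geodesic balls, with centres $c^+$ and $c^-$ on opposite sides of the $\{y=0\}$ plane. In the unperturbed Euclidean setting, the set of points equidistant from the four $p_i$ is a one-dimensional line (the axis through their planar circumcentre perpendicular to the $xz$-plane), a nongeneric degeneracy. The asymmetric perturbation should be chosen so that the function $y \mapsto \sum_i \distM{(0,y,0)}{p_i}^2$ acquires a reflection-asymmetric profile with two distinct non-degenerate critical points at which all four distances to the $p_i$ coincide. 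A straightforward implicit-function-theorem argument then shows that the two centres $c^\pm$ persist for all sufficiently small $\eta$.

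Given the local picture, I would complete $\mpts$ to a generic $\samconst$-net of $\man$ containing $\{p_1,\dots,p_4\}$ and such that no other point of $\mpts$ lies in either of the two Delaunay balls around $c^\pm$. This can be arranged by starting from a fine lattice of side $\samconst$ and locally adjusting inside a neighbourhood of $B_0$ to include the four $p_i$ while preserving separation of the order $\samconst$. A further arbitrarily small perturbation then yields genericity in the sense of \Defref{def:ll.delaunay.gen} while preserving the two distinct Voronoi vertices $c^\pm$ dual to $\splxs$, since having two isolated Voronoi vertices dual to a fixed simplex is an open condition on both the point set and the metric. The topological conclusion then follows: $\vorcellman{\splxs}$ has two components, so traversing the Voronoi edges emanating from $c^+$ and $c^-$ forces at least one triangle face $\splxt \subset \splxs$ either to be shared with a third Delaunay tetrahedron $\splxs' \neq \splxs$, or to have both ends of $\vorcellman{\splxt}$ incident to the same $\splxs$; either scenario violates the pseudomanifold property enjoyed by every triangulated 3-manifold, so $\carrier{\delMmpts}$ cannot be homeomorphic to $\man$. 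The perturbation stability statement then comes for free, because all the relevant inequalities (emptiness and maximality of the balls around $c^\pm$, separation of $\mpts$) are strict.

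The main obstacle will be the first, purely geometric, step: constructing and verifying a metric perturbation that provably produces two distinct Delaunay balls rather than either one or none. The behaviour of the equidistant locus under a generic perturbation of the Riemannian metric depends on second-order terms, and the perturbation profile inside $B_0$ must be chosen so that the induced variation of the squared-distance function along the $y$-axis has two distinct sign changes. A convenient way to carry out this analysis is to expand the geodesic distance $\distM{(0,y,0)}{p_i}$ in powers of $\eta$ about the Euclidean distance, show that to leading order the difference between same-$i$ distances at $y$ and $-y$ is an odd, non-vanishing function of $y$, and then invoke the intermediate value theorem on both sides of $\{y=0\}$ to produce the two circumcentres. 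Once this core construction is in hand, the packaging into an $\samconst$-net and the topological deduction are comparatively routine.
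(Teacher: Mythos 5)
Your high-level architecture (a flat torus with a locally perturbed metric, four coplanar points acquiring two circumcentres, completion to an $\samconst$-net, and the pseudomanifold violation at a triangle whose Voronoi edge has both endpoints dual to the same tetrahedron) matches the paper, and your topological conclusion and the ``strict inequalities are open conditions'' stability remark are in the right spirit. But the core geometric step, as you describe it, would fail. A point $q$ is equidistant from four points $p_1,\dots,p_4$ only if it satisfies \emph{three} independent equations $\distM{q}{p_1}=\distM{q}{p_i}$, $i=2,3,4$; restricting $q$ to the one-dimensional axis through the Euclidean circumcentre leaves an overdetermined system, so a \emph{generic} perturbation of the metric --- and in particular one that deliberately breaks the reflection symmetry $y\mapsto -y$, as you propose --- destroys all circumcentres rather than producing two. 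Moreover, critical points of $y\mapsto\sum_i\distM{(0,y,0)}{p_i}^2$ are not points where the four distances coincide, so the ``odd function plus intermediate value theorem'' plan does not produce circumcentres at all. The paper's mechanism is the opposite of symmetry breaking: it places $u,v$ on the $z$-axis and $w,p$ on the $x$-axis and perturbs only $g_{zz}$ by an \emph{even} bump $f(y)$ peaked at $y=0$, so that all three coordinate-plane reflections are isometries. These symmetries force $\distM{q}{u}=\distM{q}{v}$ and $\distM{q}{w}=\distM{q}{p}$ automatically for $q$ on the $y$-axis, reducing equidistance to the single scalar equation $\distM{(0,y,0)}{u}=\distM{(0,y,0)}{w}$; this function of $y$ is even, has one sign at $y=0$ (where the bump inflates the distance to $u$) and the opposite sign for larger $|y|$ (where Euclidean geometry dominates and $w$ is farther out by the factor $1+\xi$), yielding the pair of roots $\pm b\samconst$. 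Without retaining that symmetry you have no mechanism for two solutions.

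Two further points. First, non-degeneracy of the circumcentres is not free: the paper must show the Jacobian of the equidistance map is nonsingular at $c$, and the delicate entry is the $y$-derivative, which is controlled only via a mean value theorem argument showing $\txi'(b)<0$ for \emph{some} $b\in(0,\bmax]$ --- so the stable configuration is only guaranteed at certain values of $b$, not all. Second, mind the order of quantifiers: the manifold must be fixed \emph{before} $\samconst$ is chosen, so you cannot let the perturbation size $\eta$ depend on the scale; the paper fixes one bump function and scales only the point configuration, checking that the sign-change argument survives as $\samconst\to 0$ because $f'(y)<0$ strictly on a neighbourhood of the origin. Your product-with-a-flat-torus reduction for $m>3$ is a genuinely different (and plausible) route from the paper's direct higher-dimensional construction, but it needs the additional check that the product point set is still a net and that the join of the bad $3$-simplex with a torus simplex inherits a disconnected Voronoi face.
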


\begin{proof}
  Consider $3$-dimensional Euclidean space, parameterised by $x$, $y$
  and $z$, to be the parameter domain of a coordinate chart of a
  compact Riemannian $3$-manifold $\man$. Place points $u$ and $v$ at
  $\pm \samconst a$ on the $z$ axis, and points $w$ and $p$ at
  $\pm \samconst a (1 + \xi)$ on the $x$-axis. Here $\samconst > 0$ is
  the sampling parameter that parameterises the scale of the example,
  $a = \unfrac{1}{\sqrt{2}}$, and $\xi> 0$ is a small parameter that
  will be constrained by considerations below. We will fix a metric
  $\gdistM$ on $\man$, and the value of $\xi$ will depend on this
  metric, as well as the sampling parameter $\samconst$.

  With this configuration, $u$ and $v$ will share a Voronoi face in
  the Euclidean metric which will extend indefinitely in the
  $y$-direction, but have a very short extent in the $x$-direction,
  assuming $\xi$ is very
  small. \Figref{fig:face.views}\subref{sfig:xzvor} depicts the slice
  of this Voronoi diagram defined by the $xz$-plane, as viewed from
  somewhere on the negative $y$-axis.

  \begin{figure}[t]
    \begin{center}
      \subfloat[Euclidean metric]{
        \label{sfig:xzvor}
        \includegraphics[width=.45\columnwidth]{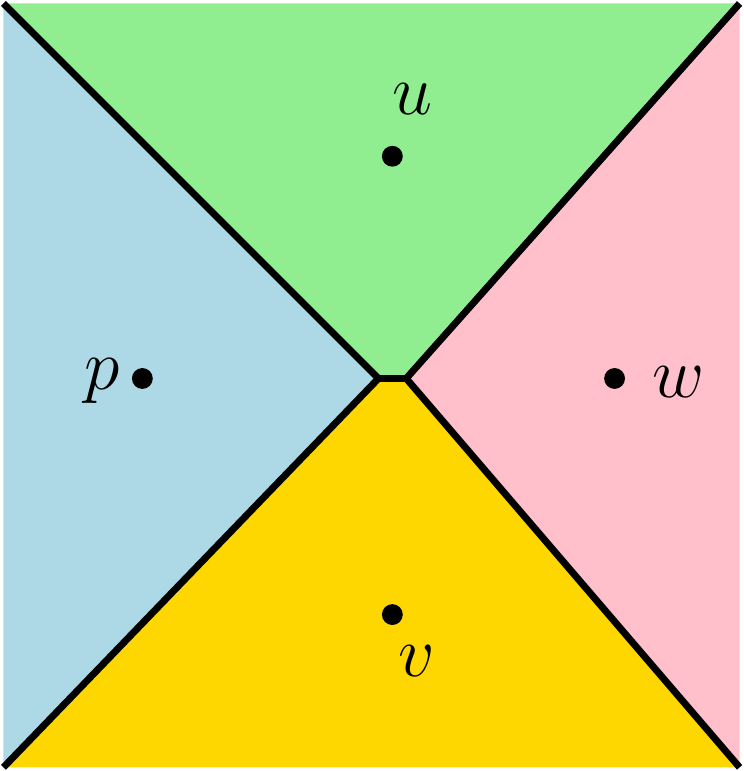} }
      \subfloat[Metric of $\man$]{
        \label{sfig:after.xzvor}
        \includegraphics[width=.45\columnwidth]{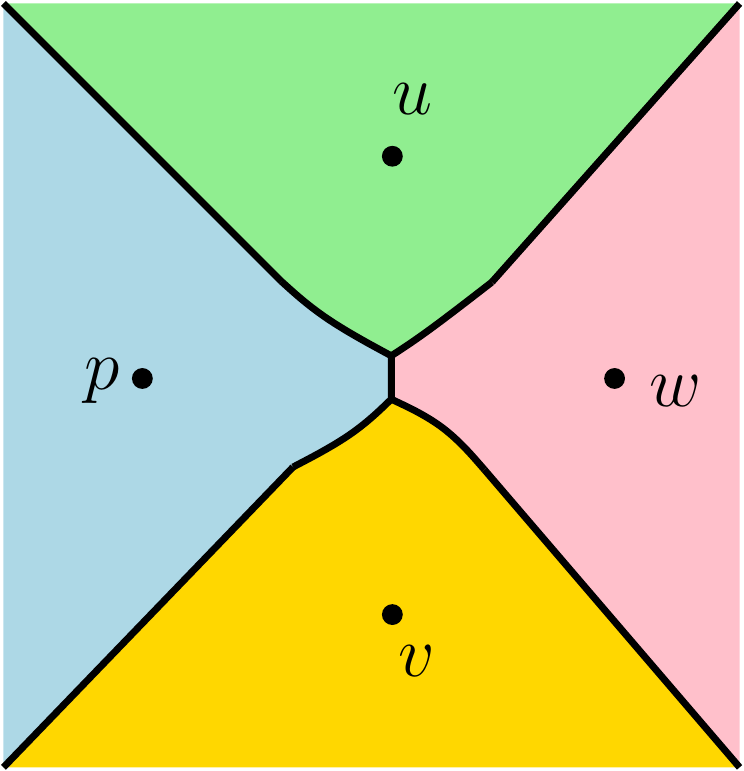} }
    \end{center}
    \caption{ A qualitative schematic of a vertical slice (the $y=0$
      plane) of the Voronoi diagram, seen from the negative $y$ axis. In
      this plane the Euclidean diagram \protect\subref{sfig:xzvor} is
      qualitatively different from the diagram
      \protect\subref{sfig:after.xzvor} defined by the metric $\gdistM$
      of $\man$.  }
    \label{fig:face.views}
  \end{figure}

  We now construct a metric on $\man$ such that the geodesic distance
  between $u$ and $v$ will be greater than the distance between $p$
  and $w$, thus qualitatively changing the structure of the Voronoi
  diagram in the $xz$-plane, as shown in
  \Figref{fig:face.views}\subref{sfig:after.xzvor}. To do this we
  employ a $\cinfty$ non-negative bump function
  $f\colon \reel \to \reel_{\geq 0}$ that is symmetric:
  $f(-y) = f(y)$, and satisfies $f'(y) \leq 0$ for $y>0$ within our
  region of interest, and a strict inequality $f'(y)<0$ for $y>0$ in
  some open neighbourhood of the origin.  Thus there is a unique
  maximum at the origin.  As an explicit example, we may assume that
  the underlying set of $\man$ is the torus $\reel^3/\ints^3$, and
  that $f(y) = A(1 + \cos(\pi y))$, where the amplitude is constrained
  by $A \leq 3/8$, as described below.
  We define the metric tensor of $\man$ in our coordinate system by
  \begin{equation*}
    g(q) =
    \begin{pmatrix}
      1 & 0 & 0 \\
      0 & 1 & 0 \\
      0 & 0 & 1 + f(y(q))
    \end{pmatrix},
  \end{equation*}
  where $y(q)$ denotes the $y$-coordinate of the point $q$.

  By integrating along the $z$-axis we obtain an upper bound on the
  geodesic distance between $v$ and $u$:
  \begin{equation}
    \label{eq:gdistuv}
    \distM{v}{u} \leq 2\int_0^{a\samconst}  (1 + f(0))^{1/2}dz
    = 2a \samconst (1 + f(0))^{1/2}.
  \end{equation}
  In fact, if the right hand side of \eqref{eq:gdistuv} is less than
  the injectivity radius of $\man$, then equality with $\distM{v}{u}$
  is attained. Indeed, in this case the curve represented by the
  $z$-axis is the minimising geodesic between $v$ and $u$. This
  follows from the symmetry of the metric: the reflection through the
  $xz$-plane $(x,y,z) \mapsto (x, -y, z)$ is an isometric involution
  that leaves points in the $xz$-plane fixed. Since an isometry maps
  geodesics to geodesics, and the minimising geodesic between $v$ and
  $u$ is unique, it must lie in the $xz$-plane, and, again by
  symmetry, the minimising curve is the straight line in the parameter
  domain.

  We require that all four points lie within a ball centred at the
  origin, and with diameter less than $\injradM$.  The final
  constraint that we impose on $f$ is that $f(0) \leq 3/4$. Another
  reason for this constraint is described below, but in particular it
  implies that $a (1 + f(0))^{1/2} < 1$. Then we see that, provided
  that $\samconst < \injradM/2$,
  \begin{equation}
    \label{eq:distuv}
    \distM{v}{u} = 2a \samconst (1 + f(0))^{1/2} <\injradM.
  \end{equation}

  We also require that $\distM{p}{w} < \distM{u}{v}$, so that we have
  the qualitative Voronoi diagram in the $xz$-plane depicted by
  \Figref{fig:face.views}\subref{sfig:after.xzvor}. Observing that
  \begin{equation*}
    \distM{p}{w} = \distE{p}{w} = 2a\samconst(1+\xi), 
  \end{equation*}
  and equating with \eqref{eq:distuv}, we see that
  $\distM{p}{w} = \distM{u}{v}$ if $\xi$ is equal to the critical
  value defined by
  \begin{equation*}
    \xi_0 = (1+ f(0))^{1/2} - 1,
  \end{equation*}
  and so we require $\xi < \xi_0$.

  The goal is to show that the tetrahedron $\sigma = \{u,v,w,p \}$ can
  be a Delaunay tetrahedron admitting two distinct Delaunay balls in
  an $\samconst$-net $P$.  To that end, we require that it has a
  circumcentre $c$ on the positive $y$-axis, at a distance less than
  $\samconst$ from the vertices ($-c$ will also be a
  circumcentre). Let the $y$-coordinate of $c$ be $b\samconst$. We
  argue that for any $b>0$, there is a $\xi<\xi_0$ such that
  $c=(0,b\samconst,0)$ and $-c$ are circumcentres of $\sigma$.

  The geodesic distance between $u$ and a given $c$ is no greater than
  the geodesic length of the straight line between them in the
  parameter domain:
  \begin{equation}
    \label{eq:upper.bnd.xi}
    \begin{split}
      \distM{u}{c} 
      &\leq \int_0^{b\samconst} \left( (1+ f(y))
        \frac{a^2}{b^2} + 1 \right)^{\frac{1}{2}} dy \\
      &< \int_0^{b\samconst} \left( (1+ f(0)) \frac{a^2}{b^2} + 1
      \right)^{\frac{1}{2}} dy = b\samconst
      \left((1+\xi_0)^2\frac{a^2}{b^2} +1 \right)^{\frac12}.
    \end{split}
  \end{equation}
  Also, the geodesic distance between $u$ and $c$ is greater than the
  Euclidean distance:
  \begin{equation*}
    \distM{u}{c} > \distE{u}{c} 
    = b\samconst\left(\frac{a^2}{b^2}+1 \right)^{\frac12}.
  \end{equation*}
  It follows then that for any $b>0$, there exists a positive
  $\txi(b) <\xi_0$ such that
  \begin{equation}
    \label{eq:def.xi}
    \distM{u}{c} 
    = b\samconst \left( (1+\txi(b))^2 \frac{a^2}{b^2} + 1 \right)^{\frac{1}{2}}.
  \end{equation}
  Since $\distM{u}{c}$ varies smoothly with $b$, $\txi$ must be a
  smooth function.

  The symmetry of the metric implies that the unique minimizing
  geodesic between $p$ and $c$ must lie in the $xy$-plane, where the
  metric coincides with the Euclidean one. It follows that, for
  $p=(-(1+\xi)a\samconst,0,0)$, we have
  \begin{equation}
    \label{eq:distpc}
    \distM{p}{c} = \distE{p}{c} 
    = 
    b\samconst \left( (1+\xi)^2 \frac{a^2}{b^2} + 1 \right)^{\frac{1}{2}}.
  \end{equation}
  Thus when $\txi(b)=\xi$ we have $\distM{u}{c}=\distM{p}{c}$, and it
  follows (by symmetry) that $c$ is a circumcentre for $\sigma$.

  We need to keep $b$ small enough that $\distM{u}{c} < \samconst$, so
  that $\sigma$ can be a Delaunay tetrahedron in an
  $\samconst$-net. It is sufficient to demand that
  $b \leq \bmax = a/2$; then the fact that $\distM{u}{c} < \samconst$
  follows from the constraint $f(0) \leq 3/4$ and the definitions
  $a^2 = 1/2$, and $b^2 \leq \bmax^2 = 1/8$:
  \begin{equation*}
    \distM{u}{c}
    < b\samconst
    \left( (1+ f(0)) \frac{a^2}{b^2} + 1 \right)^{\frac{1}{2}}
    \leq \samconst.
  \end{equation*}
  Also, it follows from the construction that the vertices of $\sigma$
  meet the separation criterion of an $\samconst$-net: geodesic
  distances between the points are bounded below by the Euclidean
  distances in the parameter domain, e.g.,
  $\distM{u}{w} > \norm{u-w} = a\samconst(1+(1+\xi)^2)^{1/2} >
  \samconst$.
  Thus, $\sigma$ can be realised as a Delaunay simplex with exactly
  two distinct Delaunay balls in an $\samconst$-net.

  The configuration can be realised as part of an $\samconst$-net so
  that the two circumcentres of $\sigma$ remain Voronoi
  vertices. Furthermore, this is not a degenerate configuration in the
  sense of \Defref{def:ll.delaunay.gen}, because the requirements of
  an $\samconst$-net are met without placing any further vertices on
  the Delaunay spheres associated to $\sigma$; i.e., $\sigma$ is not
  the proper face of any Delaunay simplex.

  When we deviate from a Euclidean metric we can encounter
  degeneracies different from those described by
  \Defref{def:ll.delaunay.gen}. For example, if we were to choose
  $\xi=\xi_0$ in our example, then $\sigma$ would have a unique
  circumcentre (at the origin), and it could be a Delaunay simplex
  that is not the proper face of any other. But an arbitrarily small
  perturbation, sending $u$ or $v$ towards the origin, for example,
  would mean that $\sigma$ has no circumcentre at all.

  However, at least for some positive $b\leq\bmax$, the presented
  example is not degenerate in any strict sense: there is some number
  $\rho>0$ such that the vertices of $\sigma$ can be each
  independently displaced by a distance $\rho$ without disturbing the
  combinatorial structure of the Voronoi diagram.  In other words,
  this bad configuration is represented by a set of positive measure
  in the configuration space.

  In order to demonstrate this it is sufficient to show that the
  circumcentres of $\sigma$ depend continuously on the position of the
  vertices, at least in a neighbourhood of $s_0 = (p,v,w,u)$. This can
  be shown by demonstrating that the origin is a regular value of the
  map
\begin{equation}
  \begin{split}
    h\colon \R^3 &\to \R^3 \\
    q &\mapsto \bigl(\distM{q}{u} - \distM{q}{p}, \distM{q}{v} -
    \distM{q}{p}, \distM{q}{w} - \distM{q}{p}\bigr).
  \end{split}
  \label{eq:cc.equation}
\end{equation}
Indeed, if this is the case, then the implicit function theorem
applied to the function $F\colon\R^{12}\times\nobreak \R^3\to\R^3$
defined by \eqref{eq:cc.equation}, such that $F(s_0,q) = h(q)$,
ensures that for all vertex positions $s$ in a neighbourhood of $s_0$,
there will be circumcentres (solutions to $F(s,q)=0$) close to $c$ and
$-c$.

In order to estimate the Jacobian determinant of $h$ at $q=c$, let
$H(q) = H(x,y,z) = \distM{q}{u} - \distM{q}{p}$ be the first component
of $h$, and observe that, by symmetry,
$H(x,y,-z) = \distM{q}{v} - \distM{q}{p}$ is the second component. For
the third component, $W(q) = \distM{q}{w}-\distM{q}{p}$, we observe
that $W(q)=0$ if $x(q)=0$. Therefore, denoting partial derivatives
with subscripts, we have $W_y = W_z = 0$ at $q=c$. Also, we observe
directly that $W_x$ is negative. Let $W_x(c) = t < 0$. Therefore the
Jacobian matrix has the form
\begin{equation*}
  Dh(c) =
  \begin{pmatrix}
    H_x & H_y & H_z \\
    H_x & H_y & -H_z \\
    t & 0 & 0
  \end{pmatrix},
\end{equation*}
and to show that its determinant is nonzero, we need to verify that
$H_y$ and $H_z$ are nonzero at~$c$.

By inspection $H_z(c) < 0$ since $c$ is a critical point of
$\distM{q}{p}$ with respect to variation in the $z$ direction, and
$\distM{q}{u}$ is strictly decreasing at $c$ as $q$ moves up a
vertical line. For $H_y(c)$, we exploit the fact that we have explicit
expressions, \eqref{eq:def.xi} and \eqref{eq:distpc}, that describe
$\distM{u}{c}$ and $\distM{p}{c}$. When differentiating with respect
to $b$ at the point where $\txi(b)=\xi$, we find
\begin{equation}
\label{eq:expr.Hy}
  H_y(c) = \partial_b\distM{c}{u} - \partial_b\distM{c}{p}
= \frac{a^2(1+\xi)\txi'\samconst}{((1+\xi)^2a^2+b^2)^{\frac12}}.
\end{equation}
\Eqnref{eq:upper.bnd.xi} shows that for any $b>0$, we have
$\txi(b)<\xi_0 = \txi(0)$, and in particular, this is true for
$b=\bmax$. So it follows from the mean value theorem that there is a
$b \in (0,\bmax]$ such that $\txi'(b)<0$, and \eqref{eq:expr.Hy}
implies that when $\xi$ is chosen so that this $b$ represents a
cicumcentre, we have $H_y(c) < 0$. Thus this configuration is stable
with respect to small perturbations of the vertices.

\begin{figure}[t]
  \begin{center}
    \subfloat[$xy$-plane seen from positive $z$-axis]{
      \label{sfig:from.above}
      \includegraphics[width=.8\columnwidth]{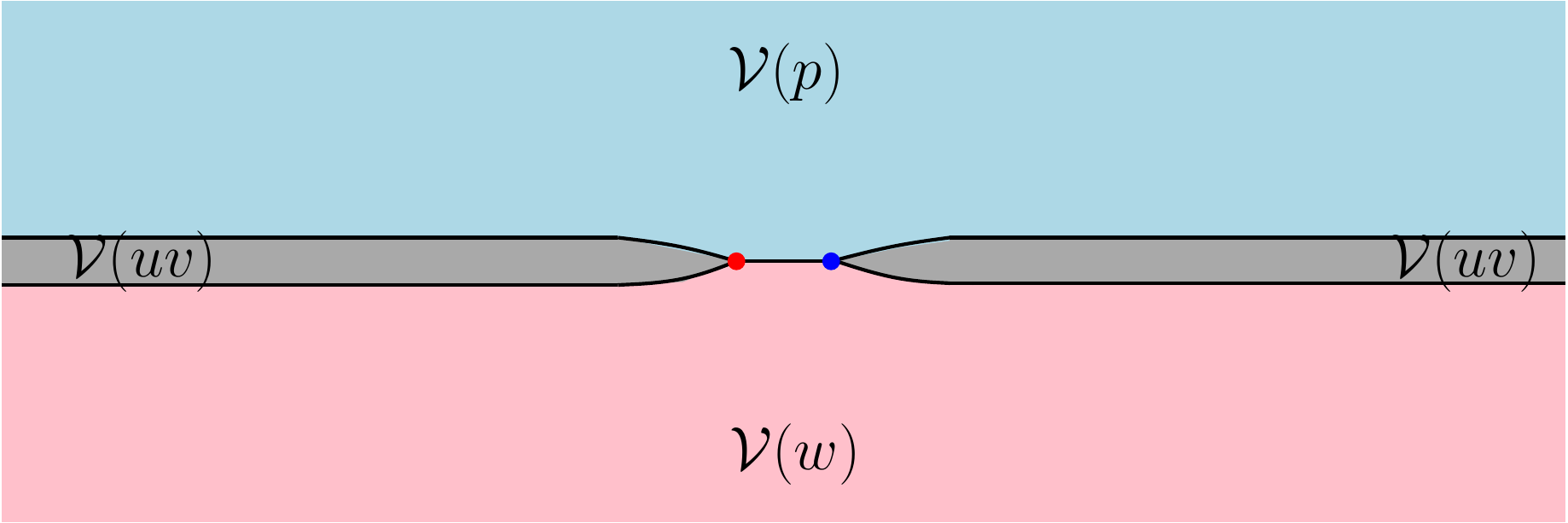} } \\
    \subfloat[$yz$-plane seen from positive $x$-axis]{
      \label{sfig:front.view}
      \includegraphics[width=.8\columnwidth]{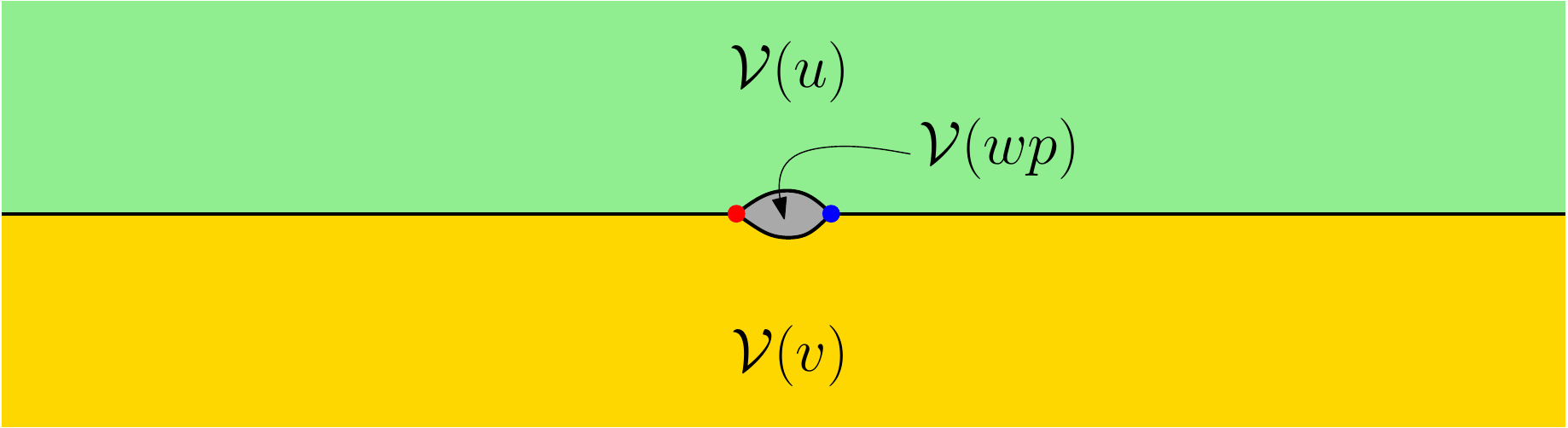} }
  \end{center}
  \caption{Looking at cross-sections; the positive $y$-direction is to
    the right. The tetrahedron $\sigma = {p,u,v,w}$, admits exactly
    two small circumballs with distinct centres (the red and blue
    points).}
  \label{fig:side.views}
\end{figure} 
This constructed Delaunay tetrahedron $\sigma$ with two distinct
circumcentres implies a defect in the Voronoi diagram that prevents
$\delMmpts$ from triangulating $\man$.  A schematic view of the
Voronoi diagram associated with $\sigma$ is depicted in
\Figref{fig:side.views}.  The circumumcentres of $\sigma$ are the
Voronoi vertices shown in the diagram. A Voronoi edge corresponds to a
Delaunay triangle, and the cofaces of the triangle are the Delaunay
tetrahedra that correspond to the Voronoi vertices at the ends of the
Voronoi edge. So for triangle $t=\{u,p,w\}$, since the Voronoi vertex
at each end of $\vorcellman{t}$ represents the same simplex, $\sigma$,
there will be only one coface of $t$ (likewise for triangle
$\{v,p,w\}$). Since our manifold $\man$ has no boundary, $\delMmpts$
cannot admit a triangulation of~$\man$.

Finally, observe that this kind of counter-example to Delaunay
triangulations will also exist in higher dimensional manifolds. For
example, the same basic construction when $m>3$ can be obtained by
starting with a regular simplex $\sigma$ of dimension $(m-2)$ such
that its circumcentre is at the origin, and its circumradius is
$\samconst a(1+\xi)$, with $a=1/\sqrt{2}$, and such that
$\affhull{\sigma}=\{x\in\R^m \mid x_{m-1}=x_{m}=0\}$. Then place two
additional vertices at $\pm a\samconst$ on the $x_{m-1}$-axis to
obtain an $m$-simplex.  The metric $g$ is then the diagonal matrix
with $g_{ii}(q)= (1+f(x_{m-1}(q)))$ if $i=m$ and $g_{ii}(q)=1$
otherwise, and $f$ defined exactly as before.
\end{proof}

The qualitative construction of the previous section
(\Figref{fig:three.tets}) is similar to the one we have considered
here except that it conceptually employs a different perturbation
function, whose support does not include the $y$-axis, so that the
simplex $\{u,v,p,w\}$ can be dual to a single stable Voronoi
vertex. The Delaunay perspective on the example represented in
\Figref{fig:side.views} would be similar to \Figref{fig:three.tets},
except that there would be two distinct green Delaunay balls for
$\sigma$.


%

\section{Discussion}

We have shown that for constructing a Delaunay triangulation of an
arbitrary compact Riemannian manifold, a sampling density requirement
is not sufficient in general. One approach to this problem might be to
constrain the kind of metrics that are considered. However, the
example shown here exhibits a problem even with a mild deviation from
homogeneity, so the admissible metrics will be severely limited. We
expect this kind of problem to arise any time one attempts to contruct
an anisotropic triangulation using the empty sphere property in
dimension $m>2$, unless the $(m-1)$-spheres circumscribing $m$ points
are constrained to all intersect in a unique $(m-2)$-sphere defined by
those points. In particular, this obstruction will be encountered when
attempting to extend the planar anisotropic triangulation result of
Canas and Gortler~\cite{canas2012} to higher dimensions.

An alternate approach is to constrain the kinds of point sets
considered, so that the problematic configurations do not arise. We
have developed this approach in other
work~\cite{boissonnat2013manmesh.arxiv}.

The example demonstrated here shows that even asymptotic arguments
involving Delaunay triangulations on manifolds must be handled with
care.  Clarkson~\cite{clarkson2006} remarked that an implication of
Leibon and Letscher's claim \cite{leibon2000} is that for $m+1$
sufficiently close points there is a unique circumsphere with small
radius. Our construction is an explicit counter-example.

It is worth emphasising that the problems discussed here only arise
when the dimension is greater than $2$. Density based sampling
criteria for Delaunay triangulation of two dimensional manifolds has
been validated \cite{leibon1999,dyer2008sgp}.


\subsection*{Acknowledgements}

We thank Gert Vegter for suggesting that we look for an obstruction
that can exist at all scales.  We also thank David Cohen-Steiner and
Mathijs Wintraecken for illuminating discussions.

This research has been partially supported by the 7th Framework
Programme for Research of the European Commission, under FET-Open
grant number 255827 (CGL Computational Geometry Learning). Partial 
support has also been provided by the Advanced Grant of the European 
Research Council GUDHI (Geometric Understanding in Higher Dimensions).

Arijit Ghosh is supported by Ramanujan Fellowship
number SB/S2/RJN-064/2015. Part of this work was done when he
was a researcher at the Max-Planck-Institute for Informatics, Germany
supported by the IndoGerman Max Planck Center for Computer Science
(IMPECS). Part of this work was also done while he was a
visiting scientist at the Advanced Computing and Microelectronics Unit,
Indian Statistical Institute, Kolkata, India.

\bibliographystyle{alpha}
\bibliography{delrefs}

\end{document}